\documentclass[12pt, article]{l4dc2022}


\title[Adaptive Variants of Optimal Feedback Policies]{Adaptive Variants of Optimal Feedback Policies}
\usepackage{times}
\usepackage[hang,flushmargin]{footmisc}
\usepackage{mathtools}

\usepackage[capitalize]{cleveref}
\crefformat{equation}{(#2#1#3)}
\Crefformat{equation}{Equation~(#2#1#3)}
\Crefname{equation}{Equation}{Eqs.}

\newtheorem{assumption}{Assumption}

\DeclareMathOperator*{\argmin}{arg\,min}

\author{%
 \Name{Brett T. Lopez} \Email{btlopez@ucla.edu}\\
 \addr Verifiable and Control-Theoretic Robotics Laboratory, University of California, Los Angeles, CA%
 \AND
 \Name{Jean-Jacques Slotine} \Email{jjs@mit.edu}\\
 \addr Nonlinear Systems Laboratory, Massachusetts Institute of Technology, MA\\
 Google AI%
}

\begin{document}

\maketitle

\begin{abstract}
   The stable combination of optimal feedback policies with online learning
   is studied in a new control-theoretic framework for uncertain nonlinear systems.
    The framework can be systematically used in transfer learning and sim-to-real applications, where an optimal policy learned for a nominal system needs to remain effective in the presence of significant variations in parameters.
    Given unknown parameters within a bounded range, the resulting adaptive control laws guarantee convergence of the closed-loop system to the state of zero cost. 
    Online adjustment of the learning rate is used as a key stability mechanism, and preserves certainty equivalence when designing optimal policies without assuming uncertainty to be within the control range.
    The approach is illustrated on the familiar mountain car problem, where it yields near-optimal performance despite the presence of parametric model uncertainty.
\end{abstract}

\begin{keywords}%
  Optimal Control - Online Learning - Reinforcement Learning - Adaptive Control %
\end{keywords}

\section{Introduction}

Autonomous decision-making and control have become ubiquitous in many safety-critical systems.
This trend highlights the importance of developing principled algorithms that possess performance guarantees even in the face of uncertainty.
Due to its versatility and generality, optimal control \citep{kirk2004optimal,bertsekas2012dynamic,bryson2018applied,sutton2018reinforcement} is the primary framework for representing and solving difficult decision-making and control problems.
In its purest form, optimal control entails computing a feedback policy that minimizes a cost function given a dynamical model and set of constraints.
While knowing a model is not strictly necessary, e.g., model-free reinforcement learning (RL), any optimal policy will implicitly depend on the underlying dynamics of the system making it susceptible to model uncertainties.
In practice, sensitivity to model perturbations can at best yield suboptimal performance, or at worst result in a catastrophic failure.

Online learning is an effective strategy that reduces sensitivity to model uncertainty while yielding a high-performance, non-conservative feedback policy.
The first common strategy --- known as indirect learning (or system identification) --- generates an explicit model of the underlying dynamics that is used to synthesize a feedback policy.
This approach is generally used in settings that require some form of prediction such as planning or games in addition to optimal control \citep{sutton2018reinforcement,bertsekas2022lessons}. 
Indirect methods, however, require sufficiently rich data to obtain an accurate enough model suitable for control leading to the exploration-exploitation tradeoff.
The second strategy --- referred to as direct learning\footnote{Model-based and model-free RL are synonymous with indirect and direct, respectively.} --- embraces the philosophy of learning just enough about the system to achieve the desired behavior.
Direct methods have a rich history in the controls community, e.g., model reference adaptive control, but have not been fully utilized in optimal control aside from model-free RL \cite{sutton1992reinforcement} and adaptive dynamic programming \citep{murray2002adaptive,vrabie2009neural,lewis2009reinforcement}.
Unfortunately, model-free RL requires extensive offline training and suffers from limited robustness while adaptive dynamic programming needs to iteratively estimate the true cost-to-go online.
A few other approaches have been proposed, e.g., \citep{agarwal2019online,kumar2021rma}, but the complexities and subtleties of combining online learning with nonlinear control often limit their applicability as they rely on linear systems theory or employ ad hoc techniques that do not generalize.

\paragraph{Contributions.} 
We develop a new adaptive optimal control framework that utilizes the certainty equivalence principle and online adjustment of the learning rate to guarantee closed-loop stability of near-optimal policies for nonlinear systems with parametric uncertainties. 
The approach consists of combining online learning with optimal value functions and policies computed offline for a family of dynamical systems.
While typically the stability of such a combination cannot be ensured, we can guarantee that the closed-loop system will converge to the state of zero cost by adjusting the learning rate online \citep{lopez2020universal}.
As a result, this work is the first to successfully combine Lyapunov-based learning with optimal control for nonlinear systems.
Two learning algorithms are derived and shown to closely resemble the optimal policy for the well-known mountain car problem despite uncertainties in the dynamical model. 

\paragraph{Scope.}
This work will consider deterministic, time-invariant, continuous-time optimal control of nonlinear systems with parametric uncertainties. 
Extension to other classes of problems is possible and is future work.
More broadly, this work may find uses in model-free reinforcement learning, differential games, sim-to-real, transfer learning, underactuated robotics, or optimal prediction.

\paragraph{Notation.} 
Let $\mathbb{R}_+$ and $\mathbb{R}_{>0}$ denote the set of positive and strictly positive reals, respectively.
The shorthand notation for a function $T$ parameterized by a vector $a$ with vector argument $s$ will be $T_a(s) \triangleq T(s;a)$.
The partial differentiation with respect to variable $x \in \mathbb{R}^n$ of function $M(x,y)$ will be $\nabla_x M(x,y) = \partial M / \partial x \in \mathbb{R}^{n}$.
The subscript for $\nabla$ will be omitted when it is clear which variable the differentiation is with respect to. 
The desired terminal state will be denoted as $x_d$. 

\section{Optimal Control Review}

Consider the deterministic, infinite-horizon optimal control problems of the form
\begin{equation}
\label{eq:ocp}
    \begin{aligned}
        V^*(x(t)) & =\underset{\pi \, \in \, \Pi}{\mathrm{min}} ~  \int \limits_{t}^\infty \ell (x(\tau),\pi(x(\tau)))\,d\tau \\
        \dot{x} & = F_\theta(x,\pi(x))
    \end{aligned}
\end{equation}
with state $x \in \mathbb{R}^n$, feedback policy $\pi : \mathbb{R}^n \rightarrow \mathbb{R}^m$, partially known dynamics $F_\theta: \mathbb{R}^n \times \mathbb{R}^m \times \mathbb{R}^p \rightarrow \mathbb{R}^{n}$ parameterized by unknown coefficients $\theta \in \mathbb{R}^p$, stage cost $\ell : \mathbb{R}^n \times \mathbb{R}^m \rightarrow \mathbb{R}_+$ where $\ell(x,\pi(x)) = 0 \iff x = x_d$, and control constraint set $\Pi$.
The optimal value function $V^* : \mathbb{R}^n \rightarrow \mathbb{R}_+$ is the cost-to-go from the initial state $x(t)$ if the optimal policy $\pi^*(x)$ is executed indefinitely.
The initial state can be represented as a function of time, i.e., $x(t)$, because the optimal control problem \cref{eq:ocp} possess an important time-invariant property where the initial time can be arbitrarily denoted as $t$.
This property will be useful in analyzing the stability of the proposed approach.
Note that a value function can also be defined for a suboptimal stable policy $\pi(x)$ and satisfies $V^*(x) \leq V^\pi(x) < \infty$ where the last inequality follows from $\pi(x)$ being a stabilizing controller\footnote{A non-finite value function would indicate either $x \nrightarrow x_d$ or $x \rightarrow x_d$ but ``slow enough" that infinite cost is accumulated. Imposing $V^\pi(x)$ be finite eliminates both scenarios.}.
The following assumptions are made about the dynamics in \cref{eq:ocp}.

\begin{assumption}
\label{assumption:dynamics}
The dynamics $F_\theta(\cdot)$ can be decomposed into known and unknown dynamics with $f : \mathbb{R}^n \times \mathbb{R}^m \rightarrow \mathbb{R}^n$ representing the known part.
\end{assumption}

\begin{assumption}
\label{assumption:uncertainty}
The uncertain part of the nonlinear dynamics in \cref{eq:ocp} can be expressed as a linear combination of known basis functions $\Delta : \mathbb{R}^n \rightarrow \mathbb{R}^{p\times n}$ and unknown parameters $\theta \in \mathbb{R}^p$ that belong to a closed convex set $\Theta$.
\end{assumption}

\begin{remark}
\cref{assumption:uncertainty} is not overly restrictive as systems with non-parametric or nonlinearly parameterized uncertainties can be converted into a linear weighting of handpicked or learned basis functions \citep{o2021meta, richards2021adaptive} 
\end{remark}

A common formulation for optimal control of dynamical systems with bounded parametric uncertainties is the following minimax optimization
\begin{equation}
\label{eq:minimax}
    \begin{aligned}
        V^*(x(t)) & = \underset{\pi \, \in \, \Pi}{\mathrm{min}} ~ \underset{\theta \, \in \, \Theta}{\mathrm{max}} ~  \int \limits_t^\infty \ell (x(\tau),\pi(x(\tau)))\,d\tau \\
        \dot{x} & = f(x,\pi(x)) - \Delta(x)^\top \theta.
    \end{aligned}
\end{equation}
The optimal value function for \cref{eq:minimax} can be obtained by solving the Hamilton-Jacobi-Isaacs equation
\begin{equation}
\label{eq:hjbi}
   \underset{\pi \, \in \, \Pi}{\mathrm{min}} \, \underset{\theta \, \in \, \Theta}{\mathrm{max}} \left\{ \ell(x,\pi(x)) +  \nabla_x V^*(x)^\top \left( f(x,\pi(x)) - \Delta(x)^\top \theta \right) \right\} = 0.
\end{equation}
Conceptually, solving \cref{eq:hjbi} is equivalent to computing a \emph{single policy for all possible dynamical models}, i.e., a robust control strategy, which will inherently perform worse than its optimal counterpart if $\theta$ where known. 
We instead propose a certainty equivalence approach that uses a \emph{family} of value functions $V^*_\theta(x)$ and policies $\pi^*_\theta(x)$, parameterized by the unknown model parameter vector $\theta$, that satisfy the Hamilton-Jacobi-Bellman (HJB) equation
\begin{gather}
\label{eq:hjb}
    \underset{\pi_\theta \, \in \, \Pi}{\mathrm{min}} \left\{ \ell(x,\pi_\theta(x)) +  \nabla_x V_\theta^* (x)^\top \left( f(x,\pi_\theta(x)) - \Delta(x)^\top \theta \right) \right\} = 0, ~~ \text{for each} ~\theta \in \Theta.
\end{gather}
The difference between \cref{eq:hjbi} and \cref{eq:hjb} is quite significant from a theoretical and practical point of view. 
The policy generated by solving \cref{eq:hjbi} is the best policy given the worst possible realization of the model uncertainty; in essence a ``one-size-fits-all" approach.
Alternatively, the policy satisfying \cref{eq:hjb} will be optimal if $\theta$ is known.
From the perspective of online learning, \cref{eq:hjb} allows one to employ the certainty equivalence principle to obtain a near-optimal policy with the current model estimate.
The challenge then becomes designing the learning algorithm since combining a stable estimator and feedback policy does not necessarily yield a stable closed-loop if the system is nonlinear \citep{krstic1995nonlinear}.
Moreover, the model-parameterized value functions present additional challenges as learning transients can also lead to unstable control.
Next, two online learning algorithms will be proposed that can be stably combined with model-parameterized value functions and policies.

\section{Theory}
\subsection{Preliminaries}

Before presenting the main theorems, we first recall a simple but fundamental result due to \citep[p.~387]{kalman1960control}, see also \citep[p.~425-427]{luenberger1979introduction}: an optimal value function $V^*(x)$ is also a control Lyapunov function.
This fact is central to our approach.
\begin{proposition}
\label{prop:clf}
An optimal value function $\,V^*(x)\,$ is also a control Lyapunov function.
\end{proposition}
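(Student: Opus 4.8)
The plan is to verify the two defining properties of a control Lyapunov function directly from the structure of the optimal control problem: positive definiteness of the value function relative to the target $x_d$, and the existence of an admissible input that renders its time derivative negative away from $x_d$. The key observation is that the optimal policy itself supplies the required control, so nothing beyond the Hamilton-Jacobi-Bellman (HJB) equation and the sign structure of $\ell$ is needed. I would carry this out for a fixed $\theta \in \Theta$ and its associated $V^*_\theta$ satisfying \cref{eq:hjb}; since the argument does not use the specific value of $\theta$, it holds for each member of the family.

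First I would establish positive definiteness. Because $V^*_\theta$ is the infinite-horizon integral of the nonnegative stage cost, it is automatically nonnegative, and holding the system at $x_d$ with the corresponding admissible input accumulates zero cost, giving $V^*_\theta(x_d) = 0$. For $x \neq x_d$, continuity of any admissible trajectory starting at $x$ keeps it away from $x_d$ on some initial interval, on which $\ell > 0$ by the property $\ell(x,\pi(x)) = 0 \iff x = x_d$; hence the cost integral is strictly positive and $V^*_\theta(x) > 0$. Finiteness of $V^*_\theta$ along stabilizing policies (noted in the excerpt) rules out degenerate escape.

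Next I would extract the dissipation relation from \cref{eq:hjb}. Evaluating the minimization at the optimal policy $\pi^*_\theta$ turns the inequality into the equality
\[
  \ell(x, \pi^*_\theta(x)) + \nabla V^*_\theta(x)^\top \bigl( f(x, \pi^*_\theta(x)) - \Delta(x)^\top \theta \bigr) = 0,
\]
so that along the closed loop $\dot x = f(x,\pi^*_\theta(x)) - \Delta(x)^\top\theta$ the chain rule yields $\dot V^*_\theta = \nabla V^*_\theta(x)^\top \dot x = -\ell(x, \pi^*_\theta(x))$. Combining with the first step gives $\dot V^*_\theta < 0$ for every $x \neq x_d$. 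A positive definite function that admits an admissible control (here $\pi^*_\theta$) making its derivative negative is by definition a control Lyapunov function, which closes the argument.

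The main obstacle is regularity rather than algebra: the chain-rule step presumes $V^*_\theta \in C^1$, whereas optimal value functions are in general only viscosity solutions of \cref{eq:hjb} and may fail to be differentiable where the minimizing policy is non-unique. I would either restrict attention to problems in which $V^*_\theta$ is smooth (the standing assumption implicit in writing $\nabla V^*_\theta$ throughout the paper), or replace the pointwise gradient by the appropriate sub/superdifferential and interpret the decrease condition in the viscosity sense. A secondary subtlety is that the \emph{strict} decrease, needed for asymptotic rather than merely marginal stability, is inherited directly from the strict positivity of $\ell$ away from $x_d$ established in the first step.
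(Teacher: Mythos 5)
Your proposal is correct and follows essentially the same route as the paper: positive definiteness of $V^*$ from the strict positivity of $\ell$ away from $x_d$, followed by the dissipation identity $\dot V^* = -\ell(x,\pi^*(x)) < 0$ along the optimal closed loop (the paper obtains it by differentiating the cost-to-go integral, you via the HJB equality evaluated at $\pi^*_\theta$ — the same fact). Your regularity caveat is handled in the paper by Assumption~\ref{assumption:pi}, which simply postulates that $V^*_\theta$ and $\pi^*_\theta$ are continuously differentiable.
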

\begin{proof}
Recall $\,V^*(x(t)) = \int_t^\infty \ell (x(\tau),\pi(x(\tau)))\,d\tau > 0\,$ for all $x \neq x_d$ and $V^*(x_d) = 0$.
Since $\dot{V}^*(x) = - \ell(x,\pi^*(x)) \ < \ 0 \ $ for all $\ x \neq x_d \ $, then $ \ V^*(x) \ $ is a control Lyapunov function.
\end{proof}

Proposition~\ref{prop:clf} can be extended to suboptimal policies $\pi(x)$, e.g., policies which approximate $\pi^*(x)$, as long as one can show that the associated value function $V^\pi(x)$ is finite over the operating domain.
Hence, once a valid value function $V^\pi(x)$ is known, whether optimal or not, one can conclude the closed-loop system converges to $x_d$ with policy $\pi(x)$ .

$ \ $

The proposed approach requires the following two differentiabilty assumptions. 

\begin{assumption}
\label{assumption:ell}
The stage cost $\ell(\cdot)$ is continuously differentiable.
\end{assumption}

\begin{assumption}
\label{assumption:pi}
The optimal value function $V^*_\theta(x)$ and policy $\pi^*_{\theta}(x)$ are continuously differentiable.
\end{assumption}

Assumption~\ref{assumption:ell} is easy to ensure by appropriate selection of $\ell(\cdot)$. 
Assumption~\ref{assumption:pi} may seem more restrictive as it excludes discontinuous optimal policies, e.g., bang-bang control. 
However, an optimal policy is often itself the result of 
as differentiable computing pipeline, or otherwise can be smoothed by appropriate selection of a continuously differentiable function.
For instance, if the optimal policy is restricted to take discrete values, e.g., $\pi^*(x) \in \Pi = \{ \pi^*(x): \{-1,\,0\,1\},~ \forall x\}$, then functions like the saturation function, logistic function, etc.~can be used to make $\pi^*(x)$ continuously differentiable with negligible performance degradation given a reasonable choice of parameters. 
Similar smoothing can be applied to the optimal value function.

We will make use of the Bregman divergence operator in our online learning algorithm to impose physical consistency \citep{wensing2017linear,lee2018natural} or sparsity based on past trajectories \citep{ghai2020exponentiated,boffi2021higher} of the parameter estimates. 

\begin{definition}[Bregman Divergence]
Let $\psi(\cdot)$ be a strictly convex, continuously differentiable function on a closed convex set. 
The \emph{Bregman divergence} associated with $\psi(\cdot)$ is given by 
\begin{equation}
\label{eq:bregman}
    \mathrm{d}_\psi ( y ~\|~ x) = \psi(y) - \psi(x) - (y-x)^\top \nabla \psi(x),
\end{equation}
with a time-derivative that satisfies \, $\dot{\mathrm{d}}_\psi (y~\|~x) = (x-y)^\top \nabla^2 \psi(x) \, \dot{x}$.
\end{definition}

\subsection{Direct Learning}
We now state the first main technical result of this work.

\begin{theorem}
Let $\upsilon(\cdot)$ be any strictly-increasing, strictly-positive scalar function and $\psi(\cdot)$ be a continuously differentiable, strictly convex function on a closed convex set.
If a value function $V^*_\theta({x})$ and policy $\pi^*_{{\theta}}(x)$ can be computed for each $\theta \in \Theta$ then the closed-loop system asymptotically converges to the state of zero cost $x_d$ with the policy $\pi^*_{\hat{\theta}}(x)$ and learning algorithm
\begin{subequations}
\label{eq:direct}
    \begin{align}
        \dot{\hat{\theta}} & = - \gamma\, \upsilon(\rho) [\nabla^2\psi(\hat{\theta})]^{-1} \Delta(x) \nabla_{x} V^*_{\hat{\theta}}(x), \label{eq:dot_theta_V} \\
        \dot{\rho} & = - \frac{\upsilon(\rho)}{\nabla \upsilon(\rho)} \sum\limits_{i=1}^p \left[ \frac{1}{V^*_{\hat{\theta}}(x) + \eta} \, \nabla_{\hat{\theta}_i} V^*_{\hat{\theta}}(x) \right] \, \dot{\hat{\theta}}_i, \label{eq:dot_rho_V}
    \end{align}
\end{subequations}
where $\gamma \in \mathbb{R}_{>0}$ is the learning rate and $\eta \in \mathbb{R}_{>0}$.
\label{thm:direct}
\end{theorem}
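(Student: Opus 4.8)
The plan is to exhibit a single composite Lyapunov function that fuses the certainty-equivalence value function with the parameter-estimation error and, crucially, treats the time-varying learning rate $\upsilon(\rho)$ as a \emph{multiplicative} weight. Motivated by Proposition~\ref{prop:clf}, which identifies $V^*_{\hat\theta}$ as a control Lyapunov function, I would take
\begin{equation*}
  \mathcal{V} = \big(V^*_{\hat\theta}(x) + \eta\big)\,\upsilon(\rho) + \tfrac{1}{\gamma}\,\mathrm{d}_\psi(\theta \,\|\, \hat\theta),
\end{equation*}
where $\theta$ is the (constant) true parameter and $\tilde\theta \triangleq \hat\theta - \theta$. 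This candidate is nonnegative since $V^*_{\hat\theta}\geq 0$, $\eta,\upsilon(\rho)>0$, and $\mathrm{d}_\psi\geq 0$ by strict convexity of $\psi$; the shift by $\eta$ also keeps $V^*_{\hat\theta}+\eta$ strictly positive so that the division in \cref{eq:dot_rho_V} is well posed, while strict convexity guarantees that $\nabla^2\psi(\hat\theta)$ is invertible in \cref{eq:dot_theta_V}.

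Next I would differentiate $\mathcal V$ along the closed loop, splitting $\dot V^*_{\hat\theta} = \nabla_x V^{*\top}_{\hat\theta}\dot x + \nabla_{\hat\theta} V^{*\top}_{\hat\theta}\dot{\hat\theta}$ into a state part and an estimate part. Writing the true dynamics as $\dot x = (f - \Delta^\top\hat\theta) + \Delta^\top\tilde\theta$ and invoking the HJB identity \cref{eq:hjb} at $\hat\theta$, so that the minimizing policy $\pi^*_{\hat\theta}$ gives $\nabla_x V^{*\top}_{\hat\theta}(f-\Delta^\top\hat\theta) = -\ell$, collapses the state part to $-\ell(x,\pi^*_{\hat\theta}(x)) + \nabla_x V^{*\top}_{\hat\theta}\Delta^\top\tilde\theta$. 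The term $\tfrac1\gamma\dot{\mathrm d}_\psi$, evaluated with \cref{eq:dot_theta_V}, equals exactly $-\upsilon(\rho)\,\nabla_x V^{*\top}_{\hat\theta}\Delta^\top\tilde\theta$ once the $\nabla^2\psi$ factors cancel, so it annihilates the indefinite cross term produced by the unknown parameter.

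The remaining danger is the estimate part $\upsilon(\rho)\,\nabla_{\hat\theta} V^{*\top}_{\hat\theta}\dot{\hat\theta}$, the learning transient that the model-parameterized value function introduces and that carries no definite sign. This is precisely what the learning-rate law \cref{eq:dot_rho_V} is engineered to cancel: substituting it into the $(V^*_{\hat\theta}+\eta)\,\nabla\upsilon(\rho)\,\dot\rho$ contribution yields $-\upsilon(\rho)\,\nabla_{\hat\theta} V^{*\top}_{\hat\theta}\dot{\hat\theta}$, matching the offending term exactly. I expect this cancellation --- showing that the $\rho$-dynamics neutralize the dependence of $V^*$ on the estimate --- to be the conceptual heart of the argument. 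Collecting everything leaves the clean inequality $\dot{\mathcal V} = -\upsilon(\rho)\,\ell(x,\pi^*_{\hat\theta}(x)) \leq 0$, since $\upsilon>0$ and $\ell\geq 0$.

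Finally I would convert the decrease of $\mathcal V$ into convergence. Since $\mathcal V$ is bounded below and nonincreasing it has a limit, so integrating gives $\int_0^\infty \upsilon(\rho)\,\ell\,d\tau < \infty$; boundedness of $\mathcal V$ also confines $\hat\theta$ through $\mathrm d_\psi$ and, via properness of $V^*_{\hat\theta}$, the state. A Barbalat-type argument then forces $\upsilon(\rho)\,\ell \to 0$, and because $\ell = 0 \iff x = x_d$ this yields $x\to x_d$. I anticipate the main technical obstacle to lie here rather than in the Lyapunov computation: one must establish uniform continuity of $\upsilon(\rho)\,\ell$ and, most delicately, rule out that $\upsilon(\rho)$ drifts to zero, which would permit $\upsilon(\rho)\,\ell\to 0$ without $\ell\to 0$. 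Controlling the sign-indefinite evolution $\tfrac{d}{dt}\ln\upsilon(\rho) = -(V^*_{\hat\theta}+\eta)^{-1}\,\nabla_{\hat\theta}V^{*\top}_{\hat\theta}\dot{\hat\theta}$ is where I would concentrate the careful estimates.
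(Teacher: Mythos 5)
Your proposal follows essentially the same route as the paper: the identical Lyapunov-like function $\upsilon(\rho)\,(V^*_{\hat\theta}(x)+\eta)+\tfrac{1}{\gamma}\mathrm{d}_\psi(\theta\,\|\,\hat\theta)$, the same cancellation of the $\tilde\theta$ cross term via \cref{eq:dot_theta_V} and of the learning-transient term $\sum_i \nabla_{\hat\theta_i}V^*_{\hat\theta}\,\dot{\hat\theta}_i$ via \cref{eq:dot_rho_V}, yielding $\dot{V}_c=-\upsilon(\rho)\,\ell\leq 0$ and concluding by Barbalat's lemma. The one point you flag as delicate --- ruling out $\upsilon(\rho)\to 0$ --- is precisely where the paper is also terse (it simply asserts $\upsilon(\rho)>0$ uniformly), so your caution there is well placed but does not constitute a departure from the published argument.
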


\begin{proof}
Consider the Lyapunov-like function
\begin{equation}\label{eq:V_c_V}
    V_c(t) = \upsilon(\rho) (V^*_{\hat{\theta}}(x)+\eta) + \tfrac{1}{\gamma} \mathrm{d}_\psi(\theta~\|~\hat{\theta}),
\end{equation}
where $0 < \eta < \infty$ and $\tilde{\theta} \triangleq \hat{\theta} - \theta$.
Differentiating \cref{eq:V_c_V} along the unknown dynamics, 
\begin{equation*}
    \begin{aligned}
        \dot{V}_c (t) &  = \upsilon(\rho) \dot{V}^*_{\hat{\theta}}(x) + \dot{\rho}\, \nabla \upsilon(\rho) (V^*_{\hat{\theta}}(x)+\eta) + \tfrac{1}{\gamma} \tilde{\theta}^\top \nabla^2 \psi({\hat{\theta}})\,\dot{\hat{\theta}} \\
        & = \upsilon(\rho) \Big[ \nabla_x V^*_{\hat{\theta}}(x)^\top \left( f(x,\pi^*_{\hat{\theta}}(x)) - \Delta(x)^\top \theta \right)  + \sum_{i=1}^p \nabla_{\hat{\theta_i}} V^*_{\hat{\theta}}(x) \, \dot{\hat{\theta}}_i \Big] \\
        & \hphantom{=} + \dot{\rho}\, \nabla \upsilon(\rho) (V^*_{\hat{\theta}}(x)+\eta) + \tfrac{1}{\gamma} \tilde{\theta}^\top \nabla^2 \psi({\hat{\theta}})\,\dot{\hat{\theta}}.
    \end{aligned}
\end{equation*}
Note the presence of the sign-indefinite terms $\, \sum_{i=1}^p\nabla_{\hat{\theta}_i} V^*_{\hat{\theta}}(x) \, \tfrac{d}{dt}{\hat{\theta}}_i \, $ which appear because the value function is parameterized by $\hat{\theta}$.
Using \cref{eq:hjb} and the definition $\theta = \hat{\theta} - \tilde{\theta}$, 
\begin{equation*}
    \begin{aligned}
        \dot{V}_c (t) = & ~ \upsilon(\rho) \nabla_x V^*_{\hat{\theta}}(x)^\top \left( f(x,\pi^*_{\hat{\theta}}(x)) - \Delta(x)^\top \hat{\theta} \right)  + \upsilon(\rho) \nabla_x V^*_{\hat{\theta}}(x)^\top \Delta(x)^\top \tilde{\theta} \\
        & + \sum_{i=1}^p \nabla_{\hat{\theta_i}} V^*_{\hat{\theta}}(x) \dot{\hat{\theta}}_i  + \dot{\rho}\, \nabla \upsilon(\rho) (V^*_{\hat{\theta}}(x)+\eta) + \tfrac{1}{\gamma} \tilde{\theta}^\top \nabla^2 \psi({\hat{\theta}})\,\dot{\hat{\theta}} \\
        = & - \upsilon(\rho) \ell(x,\pi^*_{\hat{\theta}}(x)) + \upsilon(\rho) \nabla_x V^*_{\hat{\theta}}(x)^\top \Delta(x)^\top \tilde{\theta} + \sum_{i=1}^p \nabla_{\hat{\theta_i}} V^*_{\hat{\theta}}(x) \dot{\hat{\theta}}_i \\\
        & + \dot{\rho}\, \nabla \upsilon(\rho) (V^*_{\hat{\theta}}(x)+\eta) + \tfrac{1}{\gamma} \tilde{\theta}^\top \nabla^2 \psi({\hat{\theta}})\,\dot{\hat{\theta}}.
    \end{aligned}
\end{equation*}
Using~\cref{eq:dot_theta_V,eq:dot_rho_V} yields $\, \dot{V}_c(t) =  - \upsilon(\rho) \ell(x,\pi^*_{\hat{\theta}}(x)) \leq 0 \, $ which implies that $\, \tilde{\theta} \,$ and the product $\, \upsilon(\rho) (V^*_{\hat{\theta}}(x) + \eta) \,$ are bounded.
Since $\, V^*_{\hat{\theta}}(x) > 0 \, $ for all $\, x \neq x_d \, $ and $\, \upsilon(\rho) > 0$ uniformly, then both $\, V^*_{\hat{\theta}}(x)\, $ and $\, \upsilon(\rho)\, $ are bounded for all $\, x \neq x_d$.
When $\, x = x_d \,$ the learning rate is zero, i.e., $\, \tfrac{d}{dt}{{\hat{\theta}}} = 0\,$, which implies $\, \dot{\rho} = 0\, $ from \cref{eq:dot_rho_V} so $\, \upsilon(\rho)\, $ remains bounded.
Hence, $\, V^*_{\hat{\theta}}(x)\, $ is bounded because $\, \eta$, $\, \upsilon(\rho)$, and $\, \upsilon(\rho) (V^*_{\hat{\theta}}(x) + \eta) \, $ are bounded.
Furthermore, boundedness of $\, V^*_{\hat{\theta}}(x)\, $ implies that $x$ is also bounded. 
Differentiating $\, \upsilon(\rho) \ell(x,\pi^*_{\hat{\theta}}(x))  \,$ and utilizing \cref{eq:dot_rho_V}, 
\begin{equation*}
    \begin{aligned}
        \frac{d}{dt} [\upsilon(\rho) \ell(x,\pi^*_{\hat{\theta}}(x))]  
          =  \upsilon(\rho) \dot{\ell}(x,\pi^*_{\hat{\theta}}(x)) - \upsilon(\rho) \ell(x,\pi^*_{\hat{\theta}}(x)) \sum\limits_{i=1}^p \left[ \frac{1}{V^*_{\hat{\theta}}(x) + \eta} \, \nabla_{\hat{\theta}_i} V^*_{\hat{\theta}}(x) \right] \, \dot{\hat{\theta}}_i    \\
    \end{aligned}
\end{equation*}
which is bounded by smoothness of $\, V^*_{\hat{\theta}}(x),~ \ell(x,\pi_{\hat{\theta}}^*(x))\,$ and boundedness of $x$, $\upsilon(\rho)$.
Hence, $\upsilon(\rho) \ell_{\hat{\theta}}(x,\pi^*(x)) \, $ is uniformly continuous.
Noting $V_c(0)$ is initially bounded, integrating $\tfrac{d}{dt}{V}_c(t)$ yields $\int_0^\infty \upsilon( \rho(\tau) ) \ell_{\hat{\theta}}(x(\tau),\pi^*(x(\tau))) \, d \tau < \infty$
so by Barbalat's lemma $\, \upsilon (\rho) \ell_{\hat{\theta}}(x,\pi^*(x)) \rightarrow 0$.
Since $\, \upsilon (\rho) > 0 \,$ uniformly and $\, \ell_{\hat{\theta}}(x,\pi^*(x)) = 0 \iff x = x_d \, $  then $ \, x \rightarrow x_d \, $ as $t\rightarrow +\infty$.
\end{proof}

\cref{thm:direct} is highly versatile. 
The direct learning algorithm \cref{eq:direct} \emph{guarantees} the system will converge to the state of zero cost despite nonlinear dynamics and {without} learning the true underlying model.
This result is an immediate consequence of using direct adaptive control formulation.
The intuition behind \cref{thm:direct} is that the function $\upsilon(\rho)$ actively adjusts the learning rate online thereby leading to an \emph{effective} learning rate $\, \gamma \, \upsilon(\rho)$.
Conceptually, the online adjustment is needed to ensure the learning transients, i.e., $\tfrac{d}{dt}\hat{\theta}$, does not destabilize the closed-loop system.
Inspecting \cref{eq:direct}, the effective learning rate is slowed when the learning transients is destabilizing while the converse is true when the learning transients is stabilizing.
The online adjustment allows us to employ the certainty equivalence principle, i.e., select the optimal policy with the current parameter estimate, without concerns of instability.

The direct learning law has an interesting connection with Lagrange multipliers (or costates of \cref{eq:ocp}).
Denoting $\, \lambda^* \in \mathbb{R}^n \, $ as the Lagrange multipliers for the optimal solution of \cref{eq:ocp}, one can trivially show $\, \nabla_x V^*_{{\theta}}(x) = \lambda^*$.
The adaptation law \cref{eq:direct} can thus be interpreted as updating the parameters in the direction of lower cost-to-go along the vector field $\Delta(x)$.
This relation provides a potentially attractive alternative to computing the gradient of $\,V^*(x)\,$ numerically, which may be cumbersome for high-dimensional systems.
It may also lead to a new adaptive optimal control paradigm based on the Hamiltonian function $\mathcal{H}_\theta (x,\lambda) \triangleq \ell(x,\pi_\theta(x)) + \lambda^\top F_{{\theta}}(x,\pi_{{\theta}}(x))$ where 
\begin{equation*}
    \begin{aligned}
        \dot{x} &= \nabla_\lambda \mathcal{H}_\theta(x,\lambda) \\
        \dot{\lambda} &= - \nabla_x \mathcal{H}_\theta(x,\lambda)
    \end{aligned}
\end{equation*}
which, in combination with the condition $\pi^*_{{\theta}}(x) = \argmin_{\pi\in\Pi} \mathcal{H}_{{\theta}}(x,\lambda)$, can be easier to solve than the HJB equation as the above is just a set of ordinary differential equations.
Future work will investigate this dual framework in more detail.

\subsection{Composite Learning}
The direct learning algorithm in \cref{eq:direct} uses $\nabla_x V_{\hat{\theta}}^*(x)$ as an error signal to update the parameter estimates.
While stable control and learning is the ultimate goal, better performance can be obtained by adding a state predictor to the adaptation law.
A state predictor is simply a form of model estimation that, when combined with direct learning, yields smoother learning transients.
We refer to the combination of direct learning and state prediction as \emph{composite learning}.
\begin{definition}[State Predictor]
    \label{def:ctsp}
    The \emph{state predictor} is $\,\varepsilon_{\hat{\theta}}(x) \triangleq \tfrac{d}{dt}({x} - x_{\hat{\theta}})\,$ where $\,\tfrac{d}{dt}x_{\hat{\theta}} \,$ is the instantaneous state speed with the current parameter estimate, i.e., $ \,\tfrac{d}{dt}x_{\hat{\theta}} = F_{\hat{\theta}}(x,\pi^*_{\hat{\theta}}(x))$.
\end{definition}

The benefit of using Definition~\ref{def:ctsp} as opposed to other predictors is the relation between $\varepsilon_{\hat{\theta}}(x)$ and $\tilde{\theta}$, namely $\varepsilon_{\hat{\theta}}(x) = \Delta(x)^\top \tilde{\theta}$ which can be arrived at using the definition of $\tfrac{d}{dt}x$ and $\tfrac{d}{dt}x_{\hat{\theta}}$.
This identity will be important for showing stability of the composite learning algorithm.

\begin{remark}
    If $\tfrac{d}{dt}{x}$ if not directly available,
    one can use in its place a filtered velocity while preserving
    linear parametrization.
    For instance, if we let $\tfrac{d}{dt}\hat{x} = \beta (x - \hat{x})$ where $\beta \in \mathbb{R}_{>0}$ then it is straightforward to show the equivalence 
    \begin{equation*}
        \dot{\hat{x}} = \hat{f}(x,\pi^*(x)) - \hat{\Delta}(x)^\top \theta
    \end{equation*}
    where $\hat{f}(x,\pi^*(x))$ and $\hat{\Delta}(x)$ are the filtered version of $f(x,\pi^*(x))$ and $\Delta(x)$, respectively.
    If identical filtering is done for $\tfrac{d}{dt}{x}_{\hat{\theta}}$, then one can let $\varepsilon_{\hat{\theta}}(x) = \tfrac{d}{dt}(\hat{x} - \hat{x}_{\hat{\theta}})$ which only depends on filtered quantities that are immediately available. 
    Moreover, we still maintain the key property $\varepsilon_{\hat{\theta}}(x) = \hat{\Delta}(x)^\top \tilde{\theta}$.
    This modification is just a practical one, as the proposed composite learning algorithm will yield a stable adaptive policy whether actual or filtered state velocity is used.
\end{remark}

We now present the second main technical result of this work.

\begin{theorem}
Let $\upsilon(\cdot)$ be any strictly-increasing, strictly-positive scalar function and $\psi(\cdot)$ be a continuously differentiable, strictly convex function on a closed convex set.
If a value function $V^*_\theta({x})$ and policy $\pi^*_{{\theta}}(x)$ can be computed for each $\theta \in \Theta$ then the closed-loop system asymptotically converges to the state of zero cost $x_d$ with the policy $\pi^*_{\hat{\theta}}(x)$ and learning algorithm
\begin{subequations}
\label{eq:indirect}
    \begin{align}
        \dot{\hat{\theta}} & = - \gamma\, \upsilon(\rho) [\nabla^2\psi(\hat{\theta})]^{-1} \Delta(x) \nabla_{x} V^*_{\hat{\theta}}(x) - \alpha \, [\nabla^2\psi(\hat{\theta})]^{-1} \Delta(x) \varepsilon_{\hat{\theta}}(x), \label{eq:dot_theta_indirect} \\
        \dot{\rho} & = - \frac{\upsilon(\rho)}{\nabla \upsilon(\rho)} \sum\limits_{i=1}^p \left[ \frac{1}{V^*_{\hat{\theta}}(x) + \eta} \, \nabla_{\hat{\theta}_i} V^*_{\hat{\theta}}(x) \right] \, \dot{\hat{\theta}}_i, \label{eq:dot_rho_indirect}
    \end{align}
\end{subequations}
where $\varepsilon_{\hat{\theta}}(x)$ is the state predictor, $\gamma \in \mathbb{R}_{>0}$ is the direct learning rate, $\alpha \in \mathbb{R}_{>0}$ is the prediction learning rate, and $\eta \in \mathbb{R}_{>0}$.
\label{thm:indirect}
\end{theorem}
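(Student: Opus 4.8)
The plan is to reuse the Lyapunov-like function \cref{eq:V_c_V} from the proof of \cref{thm:direct},
\begin{equation*}
    V_c(t) = \upsilon(\rho)\bigl(V^*_{\hat{\theta}}(x)+\eta\bigr) + \tfrac{1}{\gamma}\mathrm{d}_\psi(\theta~\|~\hat{\theta}),
\end{equation*}
since the composite law \cref{eq:indirect} differs from the direct law only by the added prediction term in \cref{eq:dot_theta_indirect}, while the learning-rate equation \cref{eq:dot_rho_indirect} is unchanged. First I would differentiate $V_c$ along the unknown dynamics and apply the HJB identity \cref{eq:hjb}, exactly as in \cref{thm:direct}, to write $\dot V_c$ as the sum of the term $-\upsilon(\rho)\ell(x,\pi^*_{\hat{\theta}}(x))$, the cross term $\upsilon(\rho)\nabla_x V^*_{\hat{\theta}}(x)^\top\Delta(x)^\top\tilde{\theta}$, the sign-indefinite term $\sum_{i=1}^p\nabla_{\hat{\theta}_i}V^*_{\hat{\theta}}(x)\dot{\hat{\theta}}_i$, the learning-rate term $\dot{\rho}\,\nabla\upsilon(\rho)(V^*_{\hat{\theta}}(x)+\eta)$, and the Bregman term $\tfrac{1}{\gamma}\tilde{\theta}^\top\nabla^2\psi(\hat{\theta})\dot{\hat{\theta}}$.

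Two cancellations carry over verbatim from \cref{thm:direct} and are independent of which $\dot{\hat{\theta}}$ is used. The learning-rate equation \cref{eq:dot_rho_indirect} is constructed precisely so that $\dot{\rho}\,\nabla\upsilon(\rho)(V^*_{\hat{\theta}}(x)+\eta) = -\upsilon(\rho)\sum_i\nabla_{\hat{\theta}_i}V^*_{\hat{\theta}}(x)\dot{\hat{\theta}}_i$, eliminating the sign-indefinite parameterization terms; and the \emph{direct} part of \cref{eq:dot_theta_indirect}, once substituted into the Bregman term, produces $-\upsilon(\rho)\tilde{\theta}^\top\Delta(x)\nabla_x V^*_{\hat{\theta}}(x)$, which cancels the cross term because $\nabla^2\psi(\hat{\theta})$ cancels the factor $[\nabla^2\psi(\hat{\theta})]^{-1}$.

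The only new contribution comes from the prediction part of \cref{eq:dot_theta_indirect}. Substituting it into the Bregman term leaves $-\tfrac{\alpha}{\gamma}\tilde{\theta}^\top\Delta(x)\varepsilon_{\hat{\theta}}(x)$, and here I would invoke the predictor identity $\varepsilon_{\hat{\theta}}(x)=\Delta(x)^\top\tilde{\theta}$ noted after \cref{def:ctsp} to rewrite this as $-\tfrac{\alpha}{\gamma}\tilde{\theta}^\top\Delta(x)\Delta(x)^\top\tilde{\theta} = -\tfrac{\alpha}{\gamma}\lVert\varepsilon_{\hat{\theta}}(x)\rVert^2 \le 0$. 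Hence
\begin{equation*}
    \dot{V}_c(t) = -\upsilon(\rho)\,\ell(x,\pi^*_{\hat{\theta}}(x)) - \tfrac{\alpha}{\gamma}\lVert\varepsilon_{\hat{\theta}}(x)\rVert^2 \le 0,
\end{equation*}
which differs from the direct case only by an extra nonpositive prediction term. I expect this step --- recognizing that the predictor identity turns the added term into a perfect negative-semidefinite quadratic in $\tilde{\theta}$ --- to be the crux of the argument, as it is what guarantees the composite modification cannot compromise the sign of $\dot V_c$.

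From here the reasoning of \cref{thm:direct} applies with no essential change: $\dot V_c \le 0$ gives boundedness of $V_c$, hence of $\tilde{\theta}$ and of $\upsilon(\rho)(V^*_{\hat{\theta}}(x)+\eta)$, from which boundedness of $V^*_{\hat{\theta}}(x)$, of $x$, and of $\upsilon(\rho)$ follows; uniform continuity of $\upsilon(\rho)\ell(x,\pi^*_{\hat{\theta}}(x))$ together with integrability of $-\dot V_c$ then yields $\upsilon(\rho)\ell \to 0$ by Barbalat's lemma, and since $\upsilon(\rho)>0$ uniformly and $\ell=0\iff x=x_d$ we conclude $x\to x_d$. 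As a byproduct, the extra term also makes $\lVert\varepsilon_{\hat{\theta}}\rVert^2$ integrable, giving $\varepsilon_{\hat{\theta}}=\Delta(x)^\top\tilde{\theta}\to 0$; this quantifies the smoothing effect of composite learning, though it is not required for the state-convergence claim itself.
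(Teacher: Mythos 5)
Your proposal is correct and follows the paper's own argument exactly: the same Lyapunov-like function \cref{eq:V_c_V}, the same cancellations via \cref{eq:dot_rho_indirect} and the direct part of \cref{eq:dot_theta_indirect}, the predictor identity $\varepsilon_{\hat{\theta}}(x)=\Delta(x)^\top\tilde{\theta}$ to turn the extra term into $-\tfrac{\alpha}{\gamma}\varepsilon_{\hat{\theta}}(x)^\top\varepsilon_{\hat{\theta}}(x)\leq 0$, and then the boundedness, uniform-continuity, and Barbalat arguments carried over from \cref{thm:direct}. The paper's proof is merely a terser version of what you wrote, so no further comparison is needed.
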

\begin{proof}
Using the same Lyapunov-like function \cref{eq:V_c_V}, one can show
\begin{equation*}
    \begin{aligned}
        \dot{V}_c(t) & = - \upsilon(\rho) \ell(x,\pi^*_{\hat{\theta}}(x)) - \tfrac{\alpha}{\gamma} \, \tilde{\theta}^\top \Delta(x) \varepsilon_{\hat{\theta}}(x) = - \upsilon(\rho) \ell(x,\pi^*_{\hat{\theta}}(x)) - \tfrac{\alpha}{\gamma} \varepsilon_{\hat{\theta}}(x)^\top \varepsilon_{\hat{\theta}}(x) \leq 0.
    \end{aligned}
\end{equation*}
Furthermore, the uniform continuity of $\dot{V}_c(t)$ can be establishing employing similar boundedness and continuity arguments as in the proof of \cref{thm:direct}.
Since the integral of $\dot{V}_c(t)$ is finite, then $x \rightarrow x_d$ as $t \rightarrow + \infty$ by Barbalat's lemma.
\end{proof}

Note that the state predictor learning rate does not need to be adjusted online to maintain stability. 
By contrast, the direct learning rate \emph{must} be adjusted to cancel the effects of learning transients on the closed-loop system.
This arises from the fundamentally different roles of direct learning and the state predictor.
The purpose of direct learning is to guarantee that the system remains stable, while the state predictor adds a form of model estimation to improve the parameter estimation transients.
As in~\citep{slotine1989composite,slotine1991applied}, the state predictor can be viewed as adding a damping term in the $\tilde{\theta}$ dynamics\footnote{Recall $\theta$ is constant so $ \ \tfrac{d}{dt}{\tilde{\theta}} =\frac{d}{dt}{\hat{\theta}}$.}
\begin{equation}
\label{eq:fof}
    \dot{\tilde{\theta}} + \alpha \, [\nabla^2 \psi(\hat{\theta})]^{-1} \Delta(x) \Delta(x)^\top \, \tilde{\theta} = - \gamma\, \upsilon(\rho) [\nabla^2\psi(\hat{\theta})]^{-1} \Delta(x) \nabla_{x} V^*_{\hat{\theta}}(x),
\end{equation}
which is a stable first-order filter with the damping coefficient $\alpha \, [\nabla^2 \psi(\hat{\theta})]^{-1} \Delta(x) \Delta(x)^\top \geq 0$. 
Observe that if $\alpha \, [\nabla^2 \psi(\hat{\theta})]^{-1} \Delta(x) \Delta(x)^\top = 0$ then the right hand side of \cref{eq:fof} must also be zero. 
Also, for further conceptual clarity one could replace the left-hand side of \cref{eq:fof} by a virtual contracting system \citep{lohmiller1998contraction,jouffroy2004methodological,wang2005partial}.

\subsection{Incorporating State Constraints}
\label{sec:constraints}
Although not explicitly represented in the optimal control problem \cref{eq:ocp}, the proposed framework can incorporate state constraints by constructing an augmented stage cost $\, \ell_a : \mathbb{R}^n \times \mathbb{R}^m \rightarrow \mathbb{R}_+ \, $ that penalizes states near the constraint. 
For example, consider a constraint set $\, \mathcal{C}\, $ with boundary $\, \partial C\, $ and interior $\,\mathrm{Int} \, \mathcal{C}$.
Let the state constraint be of the form $\, g(x) \leq 0\, $ where $\, g(x) \rightarrow 0\ $ as $\, x \rightarrow \partial \mathcal{C}$.
The augmented stage cost can then be defined to be $\, \ell_a(x,\pi(x)) \triangleq \ell(x,\pi(x)) + \sigma(g(x))\, $ where $\, \sigma(\cdot)\, $ is continuously differentiable, $\, \sigma(\cdot) \geq 0\, $ for all $\, x \in \mathrm{Int}\,\mathcal{C}$, and $\, \sigma(g(x)) \rightarrow \infty \, $ as $\, x \rightarrow \partial \mathcal{C}$. 
The value function $\,V^*(x(t)) = \int_t^\infty \ell_a (x(\tau),\pi(x(\tau)))\,d\tau \,$ is still a control Lyapunov function as it possess all the necessary properties (see Proposition~\ref{prop:clf}).
In contrast to control barrier functions \cite{ames2016control,ames2019control}, using an augmented cost function yields less myopic control since the optimal policy has explicit knowledge of the constraints.

\section{Experiments: Mountain Car}

\begin{figure}[t]
    \centering
    \includegraphics[width=\columnwidth]{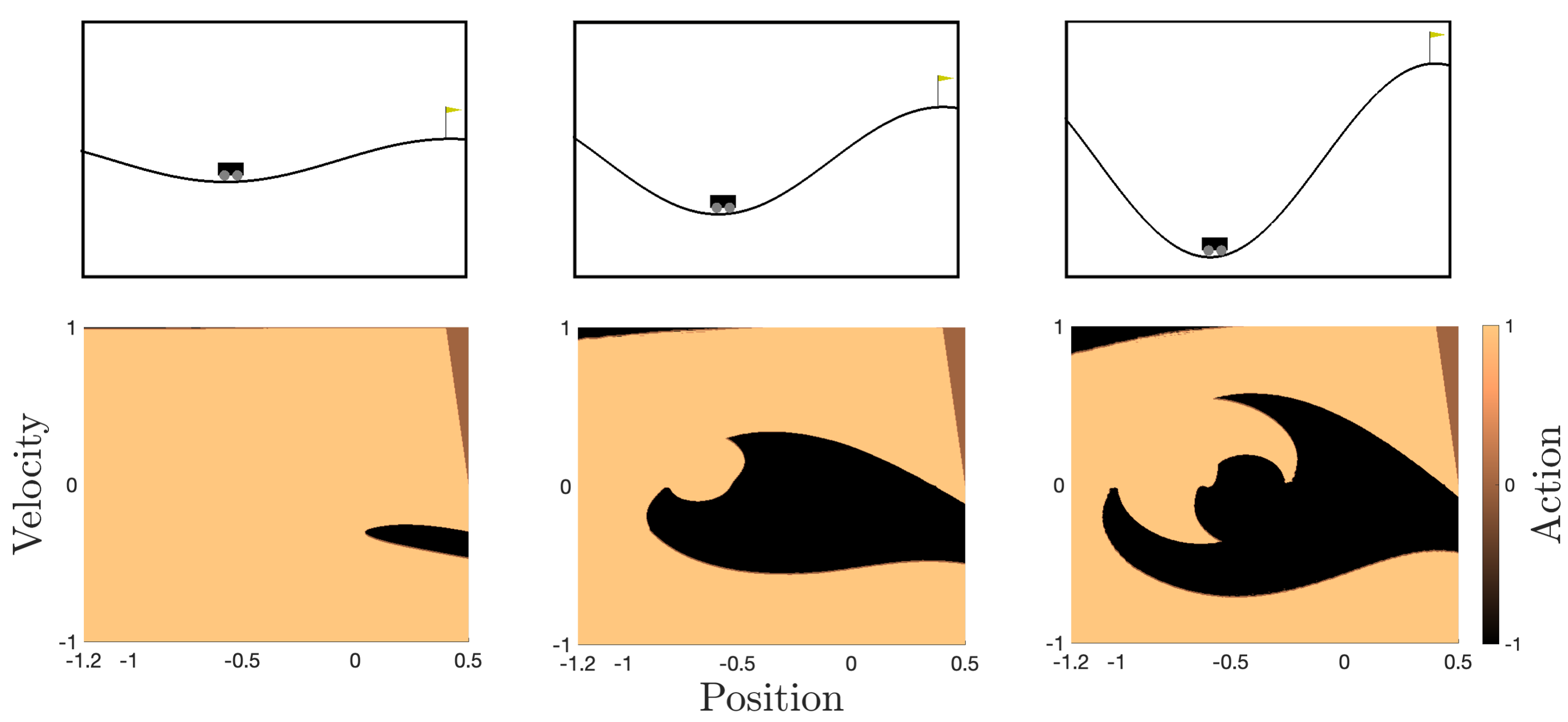}
    \vskip -0.2in
    \caption{The proposed learning algorithms were tested in different mountain car environments (top). Each environment has its own unique optimal policy (bottom).}
    \label{fig:sim}
    \vskip -0.2in
\end{figure}

\paragraph{Overview.} 
The direct and composite learning algorithms were tested on a modified version of the benchmark mountain car problem.
The goal is for an under-powered car to reach the top of a mountain with an unknown slope. 
The continuous-time optimal control problem is
\begin{equation}
    \begin{aligned}
        V_\theta^*(p(t),v(t)) & =\underset{\pi \, \in \, \Pi}{\mathrm{min}} ~  \int \limits_t^\infty \ell_a (p(\tau),v(\tau))\,d\tau \\
        \dot{p} & = v, ~ \dot{v} =  0.1 \, \pi_\theta(p,v) - \, \theta \, \mathrm{cos}(3\, p) , \\
        p & \in [-1.2,\, 0.5]\,\mathrm{m}, ~ v \in [-1,\, 1]\, \mathrm{m / s}, \\
        \pi_\theta(p,v) & \in \{-1,\,0,\,1\} \, \mathrm{m / s^2}, ~ \theta \in [0.05,\, 0.4]\, \mathrm{m / s^2}.
    \end{aligned}
\label{eq:mc_ocp}
\end{equation}
The slope of the mountain is considered to be unknown, but varies continuously from relatively flat to steep (top row of \cref{fig:sim}).
The corresponding optimal policies are significantly different depending on the slope of the mountain, as seen in the bottom row of \cref{fig:sim}.
The policy for the relatively flat mountain is to predominately drive forward as the car has enough power to reach the top.
However, as the mountain becomes steeper the policy changes to a switching function where the vehicle must accelerate forward and backward until it has enough momentum to reach the top.
The dramatically different feedback policies further motivates the use online learning to ensure the closed-loop system is stable and performing well.

\paragraph{Parameters.} All differential equations were discretized with a time step $dt = 0.001\,\mathrm{s}$.
The direct and state predictor learning rates were $\gamma = 0.02$ and $\alpha = 200000$, respectively. 
Note the large magnitude difference is due to the relative sizes of the error signal used by each component.
The stage cost was chosen to be $\,\ell(p) = 1 - \mathrm{exp}(-40|p-p_d|)\,$ which is continuously differentiable and satisfies $\ell(p_d) = 0$.
The augmented stage cost $\,\ell_a(\cdot)\,$ utilized an $L_1$ penalty on states near the path constraints of \cref{eq:mc_ocp}.
The learning rate scaling function was $\,\upsilon(\rho) = 0.9 \, e^\rho + 0.1\,$ and $\eta = 100$.
The parameter estimates were bounded by using the the Bregman divergence with the log function.

\paragraph{Methods.} Optimal value functions and feedback policies were computed offline using value iteration (VI) for several different mountain slopes. 
The dynamics were discretized with a time step of $0.1\,\mathrm{s}$ so the discrete-time version of VI could be employed. 
Bilinear interpolation between states was utilized to improve accuracy and rate of convergence.

\begin{figure}[t]
  \centering
\subfigure[Relatively flat mountain.]{\label{fig:results_shallow}\includegraphics[width=.45\columnwidth]{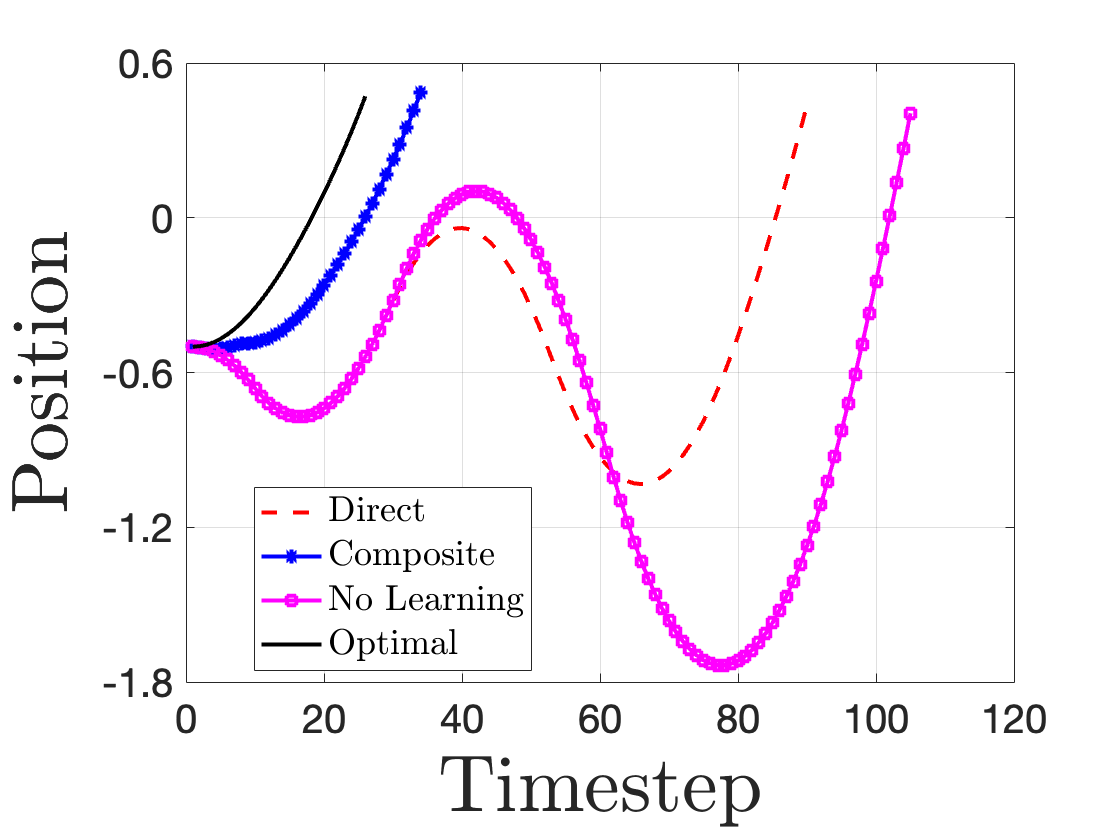}}
\hspace{1em}
\subfigure[Steep mountain.]{\label{fig:results_steep}\includegraphics[width=.45\columnwidth]{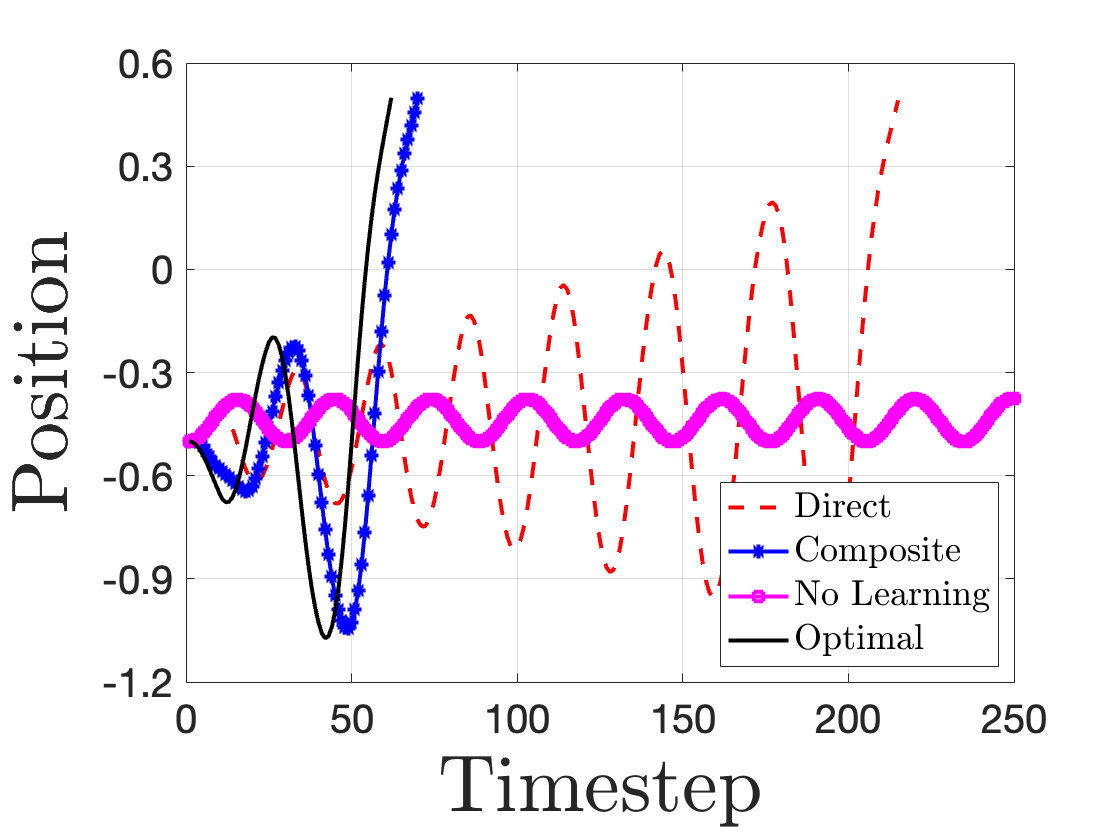}}
\caption{Comparison of four policies for different mountain car environments.}
\label{fig:results}
\vskip -0.2in
\end{figure}

\paragraph{Results.}
The direct and composite learning algorithms were evaluated in different environments by comparing their performance to the true optimal policy and to a static (no learning) policy.
Each policy (aside from the optimal) was initialized with model parameters that were furthest from that of the true model, i.e., the policy for the relatively flat mountain ($\theta = 0.05 \, \mathrm{m / s^2}$)  was used on the steep mountain ($\theta = 0.4 \, \mathrm{m / s^2}$) and vice versa. 
\cref{fig:results} shows the closed-loop position for the different policies when deployed on the relatively flat (\cref{fig:results_shallow}) and steep (\cref{fig:results_steep}) mountain.
In both environments, the direct and composite algorithms outperform the static policy.
The steep mountain test case (\cref{fig:results_steep}) shows that the closed-loop system still converges to the state of zero cost with the direct and composite learning policies despite the initial policy being unstable.
The phase portraits of each policy, in addition to the optimal value function, are also shown in \cref{fig:phase}.
Quantitatively, the composite learning policy most closely resembles the optimal policy, yielding a similar closed-loop cost as shown in \cref{table:results}.
The improved performance is an immediate benefit of incorporating state perdition in learning.
These experimental results demonstrate the strong stability guarantees the two proposed learning algorithms possess, in addition to the benefits of composite learning in achieving near-optimal closed-loop performance.

\begin{figure}[t]
    \centering
    \includegraphics[width=\columnwidth]{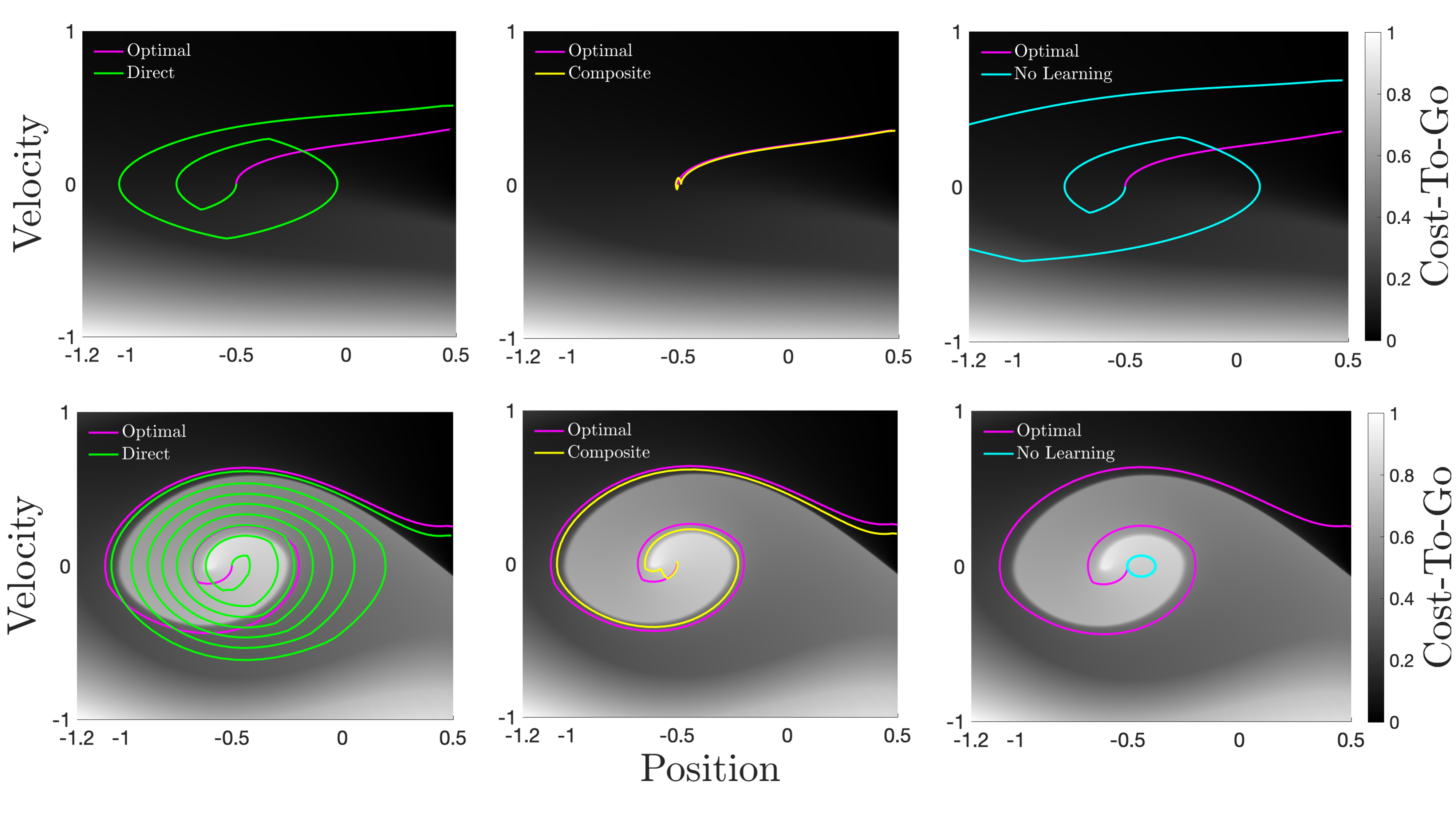}
    \vskip -0.2in
    \caption{Phase portraits of the four tested policies for the relatively flat (top) and steep (bottom) mountain environment. The normalized value function (cost-to-go) is also shown.}
    \label{fig:phase}
    \vskip -0.2in
\end{figure}

\begin{table}[!t]
    \caption{Closed-Loop Cost of Different Policies}
    \vspace{.1cm}
    \label{table:results}
    \begin{center}
    \begin{tabular}{|c|c c c c|}
    \hline 
     Mountain Grade & No Learning & Direct & Composite & Optimal \\
    \hline 
    {Relatively Flat} & 20.9 & 17.9 & \textbf{6.57} & 5.01 \\
    {Steep} & $\infty$ & 42.7 & \textbf{13.7} & 12.1 \\
    \hline
    \end{tabular}
    \end{center}
    \vskip -0.1in
\end{table}

\section{Discussion}
This work proposed two online learning algorithms that can be combined with optimal feedback policies to improve closed-loop performance when the underlying dynamical model is not fully known. 
By taking a control-theoretic approach, we proved asymptotic convergence of the closed-loop system to the state of zero cost.
We also empirically showed that combining direct learning with a state predictor can yield a near-optimal stable policy.
The approach leverages the certainty equivalence principle when computing optimal policies through online adjustment of the learning rate.
Unlike other existing approaches, stability is \emph{guaranteed}.
The proposed learning framework may have a profound impact on offline computation/training of optimal policies as it is highly parallelizable, i.e., generate $N$ policies for $N$ different dynamical models independently.
More generally, it bypasses the ``one-size-fits-all" strategy of finding a single policy for all possible models,  simplifying the control synthesis problem.

Several future works are of interest.
The first is a thorough regret analysis \citep{hazan2007logarithmic,agarwal2019logarithmic,dean2018regret,boffi2021regret} of the proposed approach. 
The empirical results presented here hint that the composite adaptive policy is nearly optimal, but a rigorous analysis is needed before any claims can be made. 
Moreover, regret may play a fundamental role in designing the state predictor or other modifications to the proposed learning law.
Nonetheless, our approach possess strong stability characteristics that apply to both linear and nonlinear systems in its current form.
Investigating the relationship between the choice of the stage cost, i.e., reward shaping \citep{ng1999policy}, and robustness is also of interest. 
As shown in the Appendix, an appropriate choice of the stage cost can guarantee convergence to the state of zero cost even without online learning.
However, this strategy alone will inherently be conservative so a combination of reward shaping for robustness and online learning is ideal to maximize closed-loop performance. 
More analysis on adaptive combinations of optimal policies is also of interest as several phenomenon encountered in nature are interconnected combinations of systems that minimize some cost function, e.g., energy.
Lastly, additional empirical evaluation will be pursued, including investigating uses in RL, differential games, transfer learning, and model predictive control.

\section*{Appendix}
\label{sec:appendix}
\begin{lemma}[Barbalat]
If a function $g(t)$ is uniformly continuous and $\, \lim_{t\rightarrow \infty}\int_0^t g(\tau)\, d\tau < \infty$, then $\, \lim_{t\rightarrow \infty} g(t) = 0$. 
\end{lemma}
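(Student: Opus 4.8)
The plan is to argue by contradiction, converting a persistent nonzero value of $g$ into a persistent nonzero contribution to the integral and thereby violating the Cauchy criterion for the convergent improper integral. First I would suppose the conclusion fails, so that $g(t)$ does not tend to $0$ as $t \to \infty$. Then there exist a constant $\epsilon > 0$ and an increasing sequence $t_n \to \infty$ with $|g(t_n)| \geq \epsilon$ for every $n$.

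The key step is to invoke uniform continuity, which is precisely where the hypothesis is stronger than mere continuity: there is a $\delta > 0$, \emph{independent of} $n$, such that $|g(s) - g(t)| < \epsilon/2$ whenever $|s - t| < \delta$. Passing to a subsequence, I may assume the points $t_n$ are spaced more than $\delta$ apart, so the intervals $[t_n, t_n + \delta]$ are pairwise disjoint. On each such interval one has $|g(s)| \geq |g(t_n)| - |g(s) - g(t_n)| \geq \epsilon/2$, and moreover $g(s)$ keeps the sign of $g(t_n)$ throughout, since the deviation is strictly smaller than $|g(t_n)|$. Hence the contributions do not cancel and $\left| \int_{t_n}^{t_n+\delta} g(\tau)\,d\tau \right| \geq \delta\,\epsilon/2 > 0$ uniformly in $n$.

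To close the argument I would appeal to convergence of $\int_0^t g(\tau)\,d\tau$ as $t\to\infty$. Finiteness of this limit is equivalent, by the Cauchy criterion, to the statement that for every $\epsilon' > 0$ there is a $T$ with $\left| \int_a^b g(\tau)\,d\tau \right| < \epsilon'$ for all $b > a > T$. Taking $\epsilon' = \delta\,\epsilon/2$ and choosing $n$ large enough that $t_n > T$ (possible since $t_n \to \infty$) produces a contradiction, because $[t_n, t_n + \delta]$ then lies beyond $T$ yet carries at least $\delta\,\epsilon/2$ of signed area.

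I expect the only genuine subtlety, and hence the main obstacle, to be the sign-preservation observation. Without it, the integral over $[t_n, t_n+\delta]$ could in principle be small through cancellation, and the lower bound $\delta\,\epsilon/2$ would fail. Uniform continuity is exactly what forbids this: it pins $g$ to within $\epsilon/2$ of $g(t_n)$ over an interval of \emph{fixed} width $\delta$, the same for every $n$. Ordinary continuity would only guarantee intervals whose widths might shrink to zero as $n \to \infty$, so that the total protected area could converge, leaving no contradiction.
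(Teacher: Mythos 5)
Your argument is correct and is the standard textbook proof of Barbalat's lemma (contradiction via uniform continuity plus the Cauchy criterion for the convergent improper integral); the paper simply quotes the lemma without proof, as a classical result, so there is no alternative route to compare against. The only cosmetic point is that uniform continuity gives $|g(s)-g(t_n)|<\epsilon/2$ strictly for $|s-t_n|<\delta$, so the lower bound $|g(s)|\ge\epsilon/2$ and the sign preservation are guaranteed on $[t_n,t_n+\delta)$ and extend to the closed endpoint by continuity (or one simply integrates over the half-open interval, which changes nothing); this does not affect the validity of the contradiction.
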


\paragraph{Reward Shaping} 
If the optimal value function $\,V_{\bar{\theta}}^*(x)\,$ is computed with nominal model parameters $\bar{\theta}$, then for the actual system
\begin{equation*}
    \begin{aligned}
        \dot{V}_{\bar{\theta}}^*(x) & = \nabla_x V_{\bar{\theta}}^* (x)^\top  \left( f(x,\pi_{\bar{\theta}}^*(x)) - \Delta(x)^\top \bar{\theta} \right) + \nabla_x V_{\bar{\theta}}^* (x)^\top \Delta(x)^\top \tilde{\theta} \\
        & = -\ell(x,\pi_{\bar{\theta}}^*(x)) + \nabla_x V_{\bar{\theta}}^* (x)^\top \Delta(x)^\top \tilde{\theta} \\
        & \leq -\ell(x,\pi_{\bar{\theta}}^*(x)) + \left|\nabla_x V_{\bar{\theta}}^* (x)^\top \Delta(x)^\top\right| \|\tilde{\theta}\|_\infty,
    \end{aligned}
\end{equation*}
where $\, \tilde{\theta} \triangleq \bar{\theta} - \theta$.
In order for $\, \dot{V}_{\bar{\theta}}^*(x) \leq 0\, $ then $\, \ell(x,\pi_{\bar{\theta}}^*(x)) \geq \left|\nabla_x V_{\bar{\theta}}^* (x)^\top \Delta(x)^\top\right| \|\tilde{\theta}\|_\infty$, i.e., $\ell(\cdot)$ must dominate the model error term.
Stability can then be guaranteed despite the presence of model error with the appropriate selection of the cost function.

\paragraph{Acknowledgments:}{We thank Michael Everett for stimulating discussions and Kenny Chen for helping with the figures.}


\bibliography{ref}

\end{document}